\newcommand\id{\leavevmode\hbox{\small1\kern-3.3pt\normalsize1}}
\theoremstyle{definition}
\newtheorem{theorem}{Theorem}%[section]
\begin{document}

\begin{abstract}
The expected indefinite causal structure in quantum gravity poses a challenge to the notion of entanglement: If two parties are in an indefinite causal relation of being causally connected and not, can they still be entangled? If so, how does one measure the amount of entanglement? We propose to generalize the notions of entanglement and entanglement measure to address these questions. %The proposal is based in the process framework, which includes quantum theory with definite causal structure as a subtheory. Quite fortuitously, the generalized notions automatically open the path to study entanglement in quantum channels and quantum networks with definite causal structure.

Importantly, the generalization opens the path to study quantum entanglement of states, channels, networks and processes with definite or indefinite causal structure in a unified fashion, e.g., we show that the entanglement distillation capacity of a state, the quantum communication capacity of a  channel, and the entanglement generation capacity of a network or a process are different manifestations of one and the same entanglement measure.

%and suggests new perspectives on old questions related to entanglement.

%We illustrate the generalized notions with coherent-information-based entanglement measures that have the operational meaning of entanglement generation capacity. 
\end{abstract}

\title{Generalizing Entanglement}
\begin{CJK*}{UTF8}{gbsn}
\author{Ding Jia (贾丁)}
\email{ding.jia@uwaterloo.ca}
\affiliation{Department of Applied Mathematics, University of Waterloo, Waterloo, ON, N2L 3G1, Canada}
\affiliation{Perimeter Institute for Theoretical Physics, Waterloo, ON, N2L 2Y5, Canada}
%\date{}

\maketitle
\end{CJK*}

\section{Introduction}
%There are at least two motivations to generalize the notion of entanglement. First, entanglement is playing an increasingly important role in studies of quantum gravity (see, e.g., \cite{sorkin1983entropy, bombelli1986quantum, jacobson1995thermodynamics, van2010building, bianchi2014architecture, cao2017space}). It is expected that indefinite causal structure \cite{hardy_probability_2005, hardy2007towards, oreshkov2012quantum, chiribella2013quantum, hardy2016operational} arises in quantum gravity, and two parties may be in a superposition of being causally related and causally separated. The question then arises whether two parties sharing an entangled state in a definite causal structure are still entangled under the perturbation of indefinite causal structure, and if so, how the magnitude of entanglement is affected. We need to generalize the notion of entanglement and entanglement measures to the indefinite causal structure setting.

%Second, the traditional notion of entanglement has some unnecessary restrictions even without considering indefinite causal structure: 1) Entanglement is defined only for states. 2) Local operations in the LOCC paradigm are restricted to channels that map states to states. 3) When two-way classical communications are allowed, the causal relation among the parties are trivial. We will see that generalizing entanglement to the indefinite causal structure setting automatically lifts these restrictions, and yields as a corollary a generalized theory of entanglement even for quantum networks with definite causal structure.

One of the most important lessons of general relativity is that the spacetime causal structure is dynamical. In quantum theory dynamical variables are subject to quantum indefiniteness. It is naturally expected that in quantum gravity causal structure reveals its quantum probabilistic aspect, and indefinite causal structure arises as a new feature of the theory \cite{hardy_probability_2005, hardy2007towards, oreshkov2012quantum, chiribella2013quantum, brukner2014quantum, hardy2016operational}. The causal relation of two parties can be indefinite regarding whether they are causally connected or disconnected. In the context of the causal structure of spacetime, it is possible that two parties are ``in a superposition'' of being spacelike and timelike. This poses a serious challenge to the notion of entanglement --- can we still meaningfully talk about entanglement for such parties?

To address this question, we propose to generalize the notions of entanglement and entanglement measure. We base our study of generalized entanglement in the framework of process matrices \cite{oreshkov2012quantum, araujo2015witnessing, oreshkov2016causal} (we briefly review the framework in Sec. \ref{sec:pf}) that is designed to incorporate indefinite causal structure. While traditionally entanglement and entanglement measures are defined in the local operations and classical communications (LOCC) paradigm, we will see that this paradigm needs some ``fine-graining'' to incorporate nontrivial causal structure of the parties (Sec.s \ref{sec:ls}). In the polished LOCC paradigm the new generalized notions of entanglement and entanglement measures can then be introduced in very natural ways (Sec. \ref{sec:pem}).

The generalizations not only render entanglement meaningful with indefinite causal structure, but also yield new perspectives on entanglement with definite causal structure. While the process framework is designed to incorporate indefinite causal structure, it includes quantum theory with definite causal structure as a subtheory. The generalization automatically assigns entanglement to quantum channels and quantum networks with definite causal structure, opening the path to study quantum correlations of states, channels, networks, and processes in a unified fashion. To illustrate this point, we show that the entanglement distillation capacity of a state, the quantum communication capacity of a channel, and the entanglement generation capacity of a network or a process are expressed using a single entanglement measure and studied in a unified way (Sec. \ref{sec:cibm}). This highlights coherent information as a prominent example in the family of entanglement measures as it comes with significant operational meaning. The generalization also suggests new perspectives in quantum gravity. We use toy models to illustrate how the new notions allow us to generalize consideration of entanglements from bifurcation of spacelike regions to that of spacetime regions (Sec. \ref{sec:ecn}). 
  
%[[The generalization of entanglement was suggested in the thesis \cite{jia2017quantum} by the same author in a different but equivalent local operation (LO) paradigm. We briefly discuss relationship between the LOCC paradigm used in this paper and the LO paradigm there in Sec. \ref{sec:lp}, before concluding with some outlooks in Sec. \ref{sec:out}.]]

%As a major application of the generalized notion of entanglement, it is demonstrated that indefinite causal structure can act as a regularization mechanism.

The generalization of entanglement considered in this paper therefore serves two purposes. It first sets a framework to study entanglement when indefinite causal structure is present. To our best knowledge this question has not been investigated before, except in the thesis of the same author \cite{jia2017quantum}. (See \cite{markes2011entropy} for the related question of how to define entropy with indefinite causal structure present.) The generalization, moreover, applies the concept of entanglement to each state, channel, network, and process that can be separated into subsystems. (In this paper we focus on bipartite entanglement and leave the detailed study of multipartite entanglement for future work.) Although we are not aware of any previous work that extends the notion of entanglement to this generality, there are certainly many related works that propose generalizations of entanglement beyond states and spacelike settings. Historically, early interests in entanglement arise from its relevance for nonlocality and nonclassicality \cite{einstein1935can, schrodinger1935gegenwartige, bell1964instein}, and Bell-type inequalities had been generalized from spacelike to timelike settings \cite{leggett1985quantum}. Since then there have been attempts to introduce or study timelike entanglement in different contexts (see e.g., \cite{brukner2004quantum, fitzsimons2015quantum, olson2011entanglement}). Logically speaking a natural question that arises is whether timelike and spacelike entanglement can be considered in a unified fashion. Indeed, we answer this question in the affirmative in this paper. Regarding this question, we mention the early work of \cite{horodecki2000unified}, which presents a unified view on quantum correlations within states and channels in the context of information communication. Theorem \ref{th:pegc} below further develops the unified view to include general networks and processes. A further question that arises in the unified studies of entanglement is if one can partition the systems in nonstandard ways. For instance, although it is natural to take the input and output systems of a channel as the two subsystems, there certainly exist other ways of partition. Can one make sense of entanglement for general partitions? Along this line, we mention the previous works of \cite{oppenheim2004probabilistic, hosur2016chaos}, which considered particular entanglement measures for channels with more general partitions. We show below that general partitions can also make sense for networks and processes, and this is so for general but not just particular entanglement measures.

\section{The process framework}\label{sec:pf}
In this section we briefly review the process framework. In the original articles \cite{oreshkov2012quantum, araujo2015witnessing, oreshkov2016causal} processes are represented by unnormalized operators. We adapt the framework to use normalized operators for the convenience of studying entanglement measures. The adaptation can be viewed as a mere change of convention.

The process framework first proposed in \cite{oreshkov2012quantum} is an extension of ordinary quantum theory to allow indefinite global causal structure. The main idea is to assume that ordinary quantum theory with fixed causal structure holds locally, while globally the causal structure can be indefinite. Article \cite{oreshkov2016causal} identifies the main assumptions of the framework as local quantum mechanics, noncontextuality, and extendibility. Local quantum mechanics says that ordinary quantum theory with fixed causal structure holds in local parties. Noncontextuality says that the joint outcome probability of local parties is non-contextual, i.e., equivalent local operations lead to the same probabilities. Extendibility says that the operations of local parties can be extended to act on an arbitrary joint ancilla quantum state.

Local parties are where classical outcomes are gathered. We denote the parties by capital letters $A, B, \cdots$, with corresponding input systems $a_1,b_1,\cdots$ and output systems $a_2,b_2,\cdots$ (each system $x$ is associated with a Hilbert space $\mathcal{H}^{x}$). Operations in local parties are quantum instrument with classical outcomes. It is convenient to represent the CP maps of quantum instruments $M:L(\mathcal{H}^{a_1})\rightarrow L(\mathcal{H}^{a_2})$ ($L(\mathcal{H})$ stands for linear operators on Hilbert space $\mathcal{H}$) by their Choi operators \cite{choi1975completely}:
\begin{align}
\mathsf{M}=\abs{a_1a_2} M\otimes \id (\ketbra{\phi_+}{\phi_+})\in L(\mathcal{H}^{a_2}\otimes\mathcal{H}^{a_1}),\label{eq:cs}
\end{align}
where $\id$ is the identity channel on system $a_1$, $\ket{\phi_+}=\sum_i \abs{a_1}^{-1/2}\ket{ii}\in \mathcal{H}^{a_1}\otimes \mathcal{H}^{a_1}$ is a normalized maximally entangled state in a canonical basis on two copies of system $a_1$, and $\abs{x}$ is the dimension of $\mathcal{H}^{x}$. The prefactor $\abs{a_1a_2}$ serves to normalize the process operators to be introduced below.\footnote{One can choose to simultaneously maintain normalization for both the quantum channel Choi operators and the process operators by modifying the composition rule \cite{jia2017quantum}. These are merely different conventions that give the same outcome probabilities.} In our notation, operators are in Sans-serif font, while the corresponding maps are in the normal font.

Under the main assumptions of the framework, a probability assignment to a set of classical outcomes $\{i,j,\cdots, k\}$ in local parties $A, B, \cdots, C$ can be represented through an operator $\mathsf{W}$ as:
\begin{align}\label{eq:pm}
P\big(\mathsf{E}^A_i,\mathsf{E}^B_j,\cdots, \mathsf{E}^C_k\big)=\Tr\{[\mathsf{E}^A_i\otimes\mathsf{E}^B_j\otimes\cdots\otimes \mathsf{E}^C_k]^T \mathsf{W}\}.
\end{align}
Here $\mathsf{E}^X_l$ represents a quantum instrument element inside party $X$ with outcome $l$. $T$ denotes operator transpose. The linear operator $\mathsf{W}\in L(\mathcal{H})$ with $\mathcal{H}:=\mathcal{H}^{a_1}\otimes\mathcal{H}^{a_2}\otimes \mathcal{H}^{b_1}\otimes\mathcal{H}^{b_2}\otimes\cdots \otimes \mathcal{H}^{c_1}\otimes\mathcal{H}^{c_2}$ is called a process operator. The assumptions of the framework imply that
\begin{align}
\mathsf{W}\ge& 0,\label{eq:Wc1}
\\
\Tr\mathsf{W}=&1,\label{eq:Wc2}
\\
\mathsf{W}=&L_V(\mathsf{W}),\label{eq:Wc3}
\end{align} 
where $L_V$ is a projector onto a linear subspace\footnote{The notation is that $_x W:=\omega^x\otimes \Tr_x W$, where $\omega^x$ is the maximally mixed state on system $x$, and that $_{[1-x]}W:=W-_x W$. $\prod_A$ denotes the product over all different parties $A,B,C...$, each with an input and an output system.} \cite{araujo2015witnessing},
\begin{align}
L_V(\mathsf{W}):=_{[1-\prod_A(1-a_2+a_1a_2)+\prod_A a_1 a_2]}\mathsf{W}.\label{eq:Wfc4}
\end{align}
One can understand $\mathsf{W}$ as a generalized density matrix that assigns probabilities to classical outcomes of the quantum instrument elements. It represents a multilinear process map $W$ that map a set of instrument elements to probabilities. We use the word ``process'' to refer to either a process operator or a process map when no ambiguity arises.

We note that ordinary trace-preserving quantum operations can be regarded as special cases of processes, as can be expected since they obey the main assumptions. In particular, both channels and states are processes. Trace non-increasing quantum operations can be regarded as ``subnormalized processes''. The process framework is therefore a generalization of ordinary quantum theory with definite causal structure.

For the ease of expression we also introduce the ``tensorial'' notation following \cite{hardy2011reformulating, hardy2012operator}. Input systems and output systems of local parties are represented as subscripts and superscripts of local operations respectively. For example, $\mathsf{M}$ in (\ref{eq:cs}) can be written as $\mathsf{M_{a_1}^{a_2}}$ to make it clear what the input and output systems are. On the other hand, input systems of local parties correspond to output systems of the processes, and vice versa. Therefore $\mathsf{W}$ in (\ref{eq:pm}) should be written as $\mathsf{W_{a_2b_2\cdots c_2}^{a_1b_1\cdots c_1}}$. We also refer to the same process as $\mathsf{W^{AB\cdots C}}$ using the parties associated with it. We apply the same super- and subscript conventions to the normal font maps and use $\mathsf{W_{a_2b_2\cdots c_2}^{a_1b_1\cdots c_1}}, \mathsf{W^{AB\cdots C}}, W_{a_2b_2\cdots c_2}^{a_1b_1\cdots c_1}$ and $W^{AB\cdots C}$ interchangeably when no ambiguity arises. The process and instrument element maps are linear maps that can be composed (so can their operators). We use repeated super- and subscripts to denote composition, e.g., $N_a^b\rho^a$ denotes the output state on $b$ when the input state $\rho$ on $a$ is fed into the channel $N$ from $a$ to $b$.

\section{LOCC settings}\label{sec:ls}
Bipartite state entanglement measures are essentially defined by the monotonicity axiom, which says that entanglement measures cannot increase under local operation and classical communication (LOCC) \cite{horodecki2009quantum}. Although sometimes not stressed, one needs to specify several ``free parameters'' to determine the LOCC setting. It is important to highlight these parameters before we study entanglement for processes.

The first parameter is the allowed local operations. For states these are quantum channels that map states to states. In special circumstances one might impose restrictions on the channels, e.g., require that the local channels preserve system dimension \cite{horodecki2009quantum}. The second parameter is the allowed direction of classical communication. The allowed classical communication among the parties are categorized into different classes (no classical communication, one-way forward classical communication, one-way backward classical communication, two-way classical communication). The third parameter is the number of rounds of available classical communications. There is a rich structure about the LOCC with different numbers of rounds \cite{chitambar2014everything}.

Processes introduce some new ingredients to both local operations and classical communications. For states the requirement that local operations are channels is justified because they map states to states. However, more general operations map processes into processes. For example, the memory channels\footnote{As shown in \cite{chiribella2009theoretical}, a memory channel of this form is the most general that maps a channel to a channel.} $L_{a_1a_2'}^{a_2a_1'}:=M_{a_1}^{a_1'l}N_{a_2'l}^{a_2}$ takes the process $W^{a_1b_1}_{a_2b_2}$ to a new process $Z^{a_1'b_1}_{a_2'b_2}:=L_{a_1a_2'}^{a_2a_1'}W^{a_1b_1}_{a_2b_2}$ (left of Fig. \ref{fig:lo}). 

\begin{figure}
    \centering
\includegraphics[width=.6\textwidth]{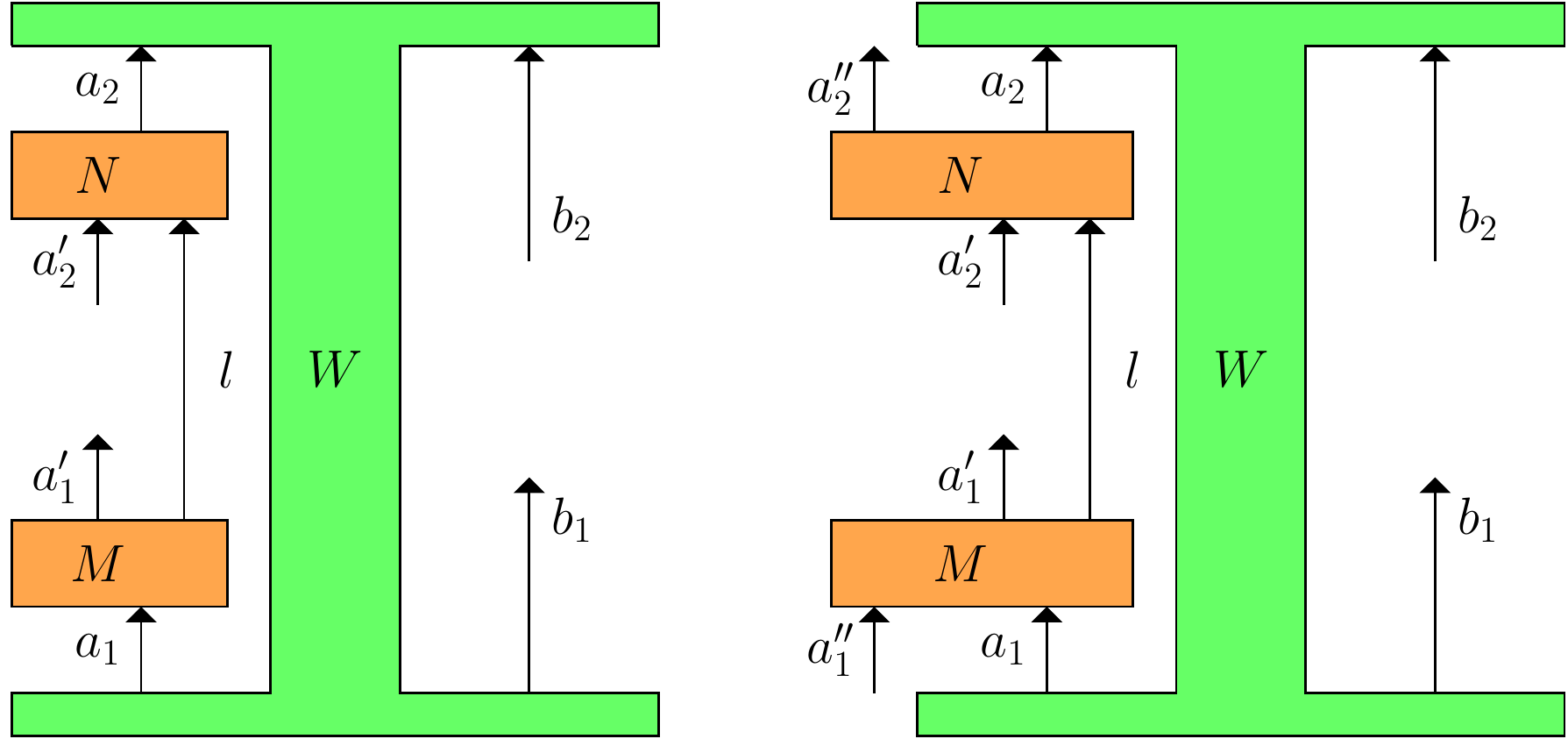}
    \caption{Local operation on processes.}
    \label{fig:lo}
\end{figure}

The extendibility assumption of the process framework allows operations in different local parties to act jointly on a correlated state. This assumption is crucial in setting up the mathematical property of positive semidefiniteness of the process operators, but such unrestricted supply of a correlated state of course cannot be allowed for the entanglement theory. For the local operations of the entanglement theory, we allow for operations whose systems extend beyond those of the processes, but do not supply arbitrary correlated resources. We allow local operations to be memory channels of the form (right of Fig. \ref{fig:lo})
\begin{align}\label{eq:elo}
L_{a_1a_2'a_1''}^{a_2a_1'a_2''}:=M_{a_1a_1''}^{a_1'l}N_{a_2'l}^{a_2a_2''}.
\end{align} This form is useful, for instance, when we use a process $W^{AB}$ to generate a channel $Z^{b_2''}_{a_1''}=L_{a_1a_1''}^{a_2}W^{a_1b_1}_{a_2b_2}K_{b_1}^{b_2b_2''}$. Here $L$ and $K$ fall within the general form of allowed local operations with some systems set trivial. By definition, a process takes a single CP map as input in each local party. Local operations of the form (\ref{eq:elo}) increases the number of systems inside a local party such that more general definite causal structure local operations than CP maps are needed to feed in the systems in order to obtain a probability. Strictly speaking, operations of the form (\ref{eq:elo}) map a process into something else, but we abuse terminology to still call the resulting object a process. The new object is still represented by a trace-one positive semidefinite operator, and the systems inside a local party have a definite causal structure. %In principle there can be local operations of more general forms (and we discuss these in Appendix \ref{app:lo}.), but the current form already incorporates most situations we are practically interested in.

%The more general local operations allow one to, for example, turn a state into a channel ($\rho^{ab}\mapsto N_{a'}^b=E_{a'a}\rho^{ab}$) or a channel into a state ($N_{a}^b\mapsto \rho^{a'b}=\sigma^{a'a}N_{a}^b$). 

\begin{figure}
    \centering
\includegraphics[width=.7\textwidth]{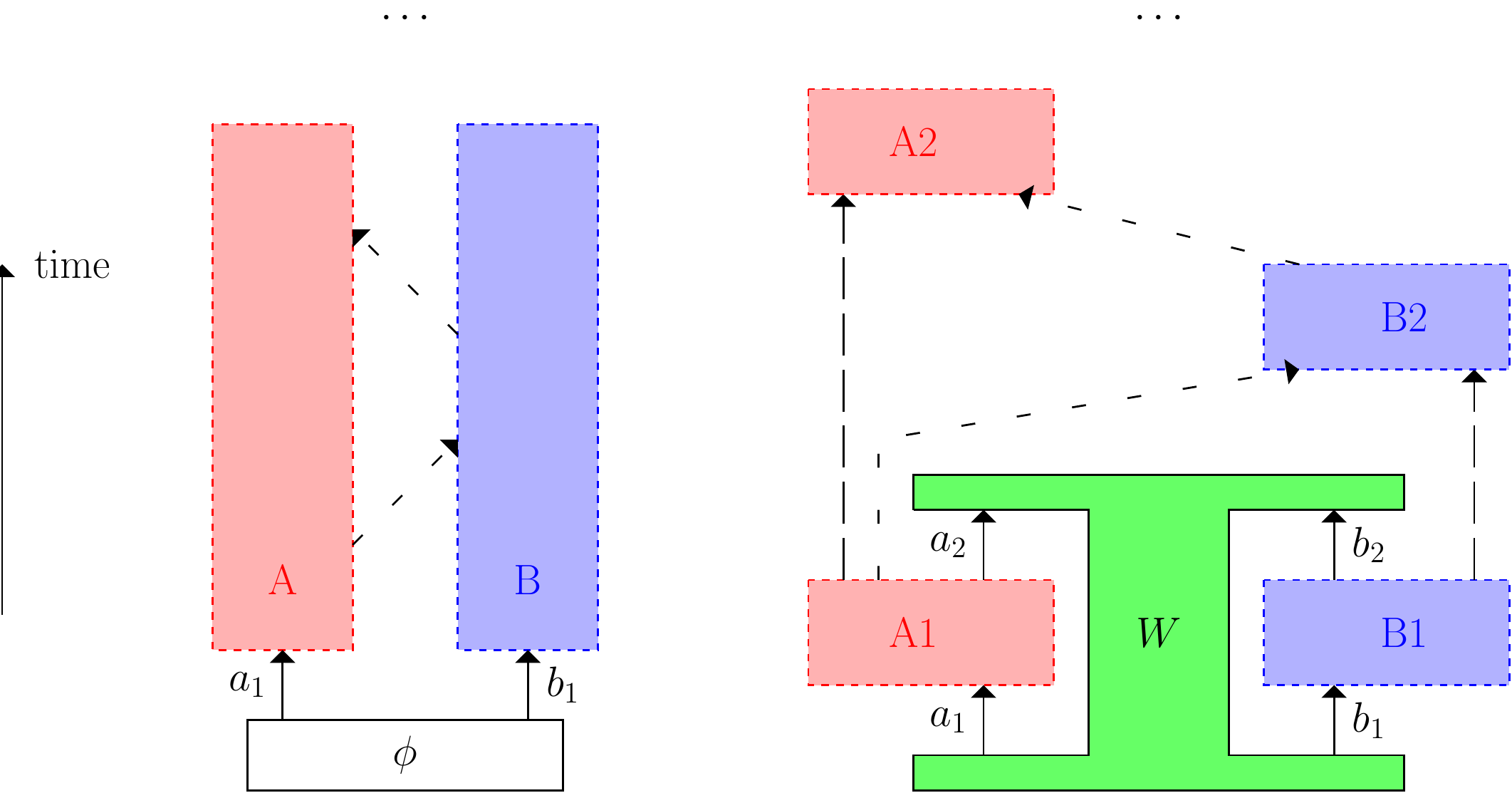}
    \caption{State vs. process LOCC. Densely dashed and loosely dashed arrows represent quantum and classical channels respectively. The state LOCC on the left can be viewed as a special case of the process LOCC on the right, with $W$ reduced to a state $\phi$ when $a_2$ and $b_2$ are taken to be trivial, and all the $A_i$ and $B_i$ respectively coarse-grained into a single $A$ and a single $B$.}
    \label{fig:pslocc}
\end{figure}

In terms of classical communication the new ingredient brought about by processes is that now the local parties have nontrivial causal relations. In the state LOCC paradigm local parties are in general not ``localized'' in time (left of Fig. \ref{fig:pslocc}). If two-way classical communication is allowed the causal relation is trivial since both parties can communicate to each other. To have non-trivial causal structure, local parties must be set to ``localize'' in time (right of Fig. \ref{fig:pslocc}). One can view this as a fine-grained version of the traditional setting such that a party localized in space (such as $A$ in Fig. \ref{fig:pslocc}) is dissected into several copies of the same party in time (such as $A1, A2, \cdots$ in Fig. \ref{fig:pslocc}). Adjacent copies are connected by the quantum identity channel to model time evolution within the local party. Classical communications connect the fine-grained parties and encoding and decoding of classical information are conducted through local operations inside the fine-grained parties.

Assembling all the ingredients together, each particular scheme of allowed local operation and communication forms an \textbf{LOCC setting}. One first specifies the direction and number of rounds of classical communications by introducing copies of local parties in time and equips them with suitable quantum and classical communication channels. One then specifies the allowed local operations inside each copied or original local party. Generally, a LOCC operation changes the number of parties associated with a process. An entanglement measure then needs to be a function defined on all these possible multipartite processes. However, measures defined only on some certain numbers of parties can make sense if the LOCC setting allows one to only create processes with those certain numbers of parties, e.g., the setting may only allow a limited number of rounds of classical communication.

Fig. \ref{fig:glocc} illustrates a general LOCC operation on a process $W$. The copies of $A$ and $B$ are labeled by subscripts from $-m$ to $n$. In ordinary entanglement theory for states the ``past parties'' ($A_0, B_0, A_{-1}, B_{-1}$ etc.) are not considered, because the states only have output systems (which are input systems to the local parties). These past parties are taken into account for processes because they also have input systems. Inside each local party some operations represented by $M_i$ or $N_i$ are already specified as part of the LOCC operation, and some open inputs and outputs are left open for local operations not yet specified to act on. By the assumption of ``local quantum physics'' of the process matrix framework, inside a local party the operations take a definite causal structure. It is shown in \cite{chiribella2009theoretical} that the most general of such operations can be written as ``quantum combs''. In the left figure, the already specified operations are shown as quantum combs with only two teeth due to the limit of space on the page. They actually represent quantum combs with multiple teeth of the form shown in the right figure. As for the classical communications, in the general case they do not have to follow the directions shown in the figure but can take other directions. With this possibility taken into consideration, the figure depicts a most general LOCC operation on the process $W$ (the number of copies of parties can go to infinity by taking $m,n$ or both to infinity). Other special cases can be reproduced by setting some subsystems trivial (one-dimensional). For example, if some classical communication drawn in the figure is absent, this can be realized by setting the input and output dimensions to be $1$ so that no non-trivial classical information can be sent. In total, the local operation applied by $A$ is all the $M_i$ combined, and the local operation applied by $B$ is all the $N_i$ combined. Classical communications are realized through the choosing suitable local operations that interact with the classical channel. The end result of the whole LOCC operation is a process whose subsystems are all the open input and output subsystems. Those that lie within $A_i$ belong to $A$, and those that lie within $B_i$ belong to $B$. If all the local operations are trace preserving (trace-non-increasing), then the final process is still represented by a positive-semidefinite operator with trace 1 (less than or equal to 1). It should be clear from inspecting the figure that apart from $A_1$ and $B_1$, no pair of local parties is in an indefinite causal order. For example, consider $A_0$ and $B_2$. Certainly $A_0$ may signal to $B_2$ through $W$ or the classical channel connecting $A_1$ and $B_2$. However, all output information from $B_2$ only propagates its future parties ($A_3, B_3, A_4, B_4$ etc.), and $A_0$ never receives any information from these parties. Therefore $B_2$ cannot signal to $A_0$. Similar arguments apply to all pairs of parties other than $A_1$ and $B_1$, so there is at most one-way signaling for these parties. LOCC operations do not generate indefinite causal order for parties other than the pair sharing of the original process $W$.

\begin{figure}
\centering
\includegraphics[width=1\textwidth]{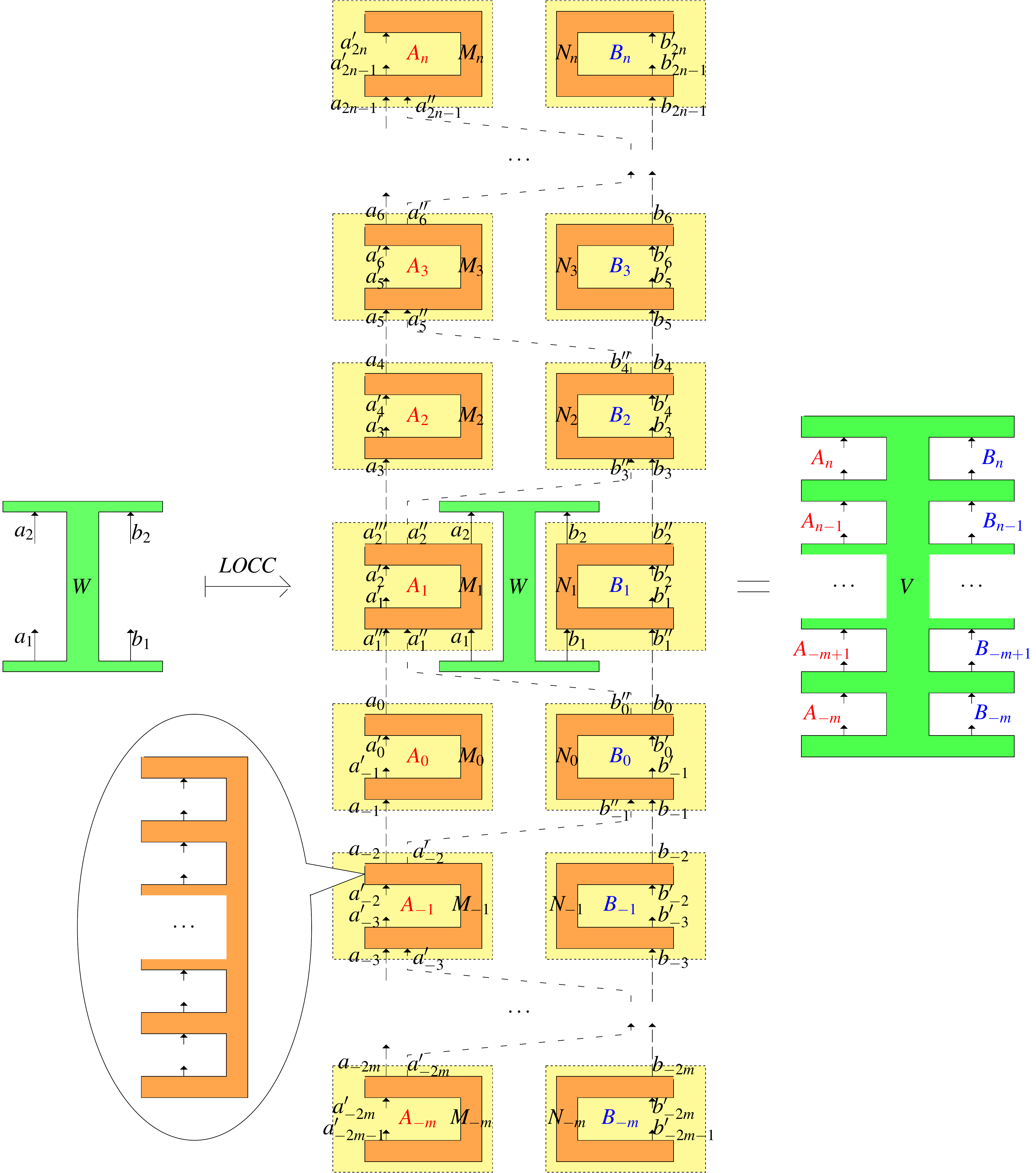}
\caption{A general LOCC operation maps $W$ into $V$. Each local quantum comb $M_i, N_j, -m\le i, j\le n$ is explicitly drawn with two teeth for simplicity, but may contain more teeth as shown in the balloon (correspondingly more systems would be drawn for $V$ on the right). Densely dashed and loosely dashed arrows represent quantum and classical channels respectively.}
\label{fig:glocc}
\end{figure}

In practice the most common two-party (say $A$ and $B$) LOCC settings have no restriction on the local operations or the number of rounds of classical communication. We denote these settings as $LO\rightarrow, LO\leftarrow, LO\leftrightarrow$, which denote LOCC settings with classical communication from $A$ to $B$, from $B$ to $A$, and two-way between $A$ and $B$, respectively.

\section{Process entanglement measures}\label{sec:pem}
%[[Two most fundamental requirements on entanglement measures are 1) entanglement cannot increase under the allowed local operations and classical communications (monotonicity) and 2) entanglement vanishes on separable states. The first requirement is usually imposed as an axiom. Because separable states can be created from local operations (that can create local states) and at least one-way classical communication (one round is enough), it follows from the monotonicity axiom that entanglement is a minimal constant on separable states. To meet the second requirement one simply set this minimal constant to zero. Therefore essentially the only requirement for a state entanglement measure is monotonicity. Although unrestricted local channels plus unlimited rounds of two-way classical communication is the most common LOCC setting, local channels that can create local states plus one round of one-way classical communication is enough to meet the two fundamental requirements if monotonicity is imposed as an axiom. This summarizes the LOCC paradigm and entanglement measures for states.]]

Once an LOCC setting is given, the corresponding \textbf{process entanglement measures} can be naturally defined %To consider entanglement measures for processes, we first ask that the specific LOCC setting be given [[Fig.]]. The original process on the original local parties forms an enlarged process on all the new and original local parties by incorporating the new classical and quantum channels. With this resource, one can model a LOCC operation by conducting local operations within the parties to obtain a new process. 
as functions on processes that obey the \textbf{monotonicity axiom}, which says that the functions do not increase under LOCC operations. It follows from the monotonicity axiom that a process entanglement measure reaches its minimal value on all processes that can be created through LOCC operations alone. As a convention, we subtract this minimal value from the measure such that LOCC-free processes have zero entanglement. A process whose entanglement measure is positive is then regarded as entangled. 

There is a subtlety about the LOCC settings without classical communications. In LOCC settings with classical communications, all ``separable processes'' of the form $W^{AB}=\sum_i p_i W^A_i$ $\otimes W^B_i$ can be created for free so entanglement measures take the minimal value. In LOCC settings without classical communications only ``product processes'' of the form $W^{AB}=W^A \otimes W^B$ can be created free. If one still wants the entanglement measures to vanish on all separable processes, one needs to impose this as an additional condition. The LOCC setting without classical communication is usually not considered in ordinary entanglement theory for states. In the study of quantum causal structure this case can be important because free classical communication can ``contaminate'' the causal structure by directly establishing causal connections. When one wants to avoid this, an LOCC setting without classical communication is in place.

Restricted to states and meaningful LOCC settings for states (the LOCC operations only result in states), any process entanglement measure reduces to a state entanglement measure by construction. By the above conventions the measure is also always zero on separable states. In this sense process measures generalize state measures.

\section{Coherent-information-based measures}\label{sec:cibm}
Like state entanglement measures, process entanglement measures encompass a family of functions. Within this family, those measures that have operational meaning are of particular interest. In this section we define coherent information for processes and study process entanglement measures based on this definition. The coherent-information-based measures quantify the entanglement generation capacity of processes and serve as examples of ``good'' process entanglement measures that have clear operational meaning.

\subsection{Coherent information for processes}\label{sec:cip}
Ordinarily coherent information is defined for states and channels \cite{wilde2017quantum}. We generalize the definition to processes. For a bipartite state $\rho^{ab}$, the coherent information with target system $b$ is defined as
\begin{align}\label{eq:cis}
I^b(\rho^{ab})&:=S^b-S^{ab}.
\end{align}
$S^x$ is the von Neumann entropy of the (reduced) state on system $x$. $I^b(\rho^{ab})$ is minus the conditional entropy $S^{a|b}(\rho^{ab}):=S^{ab}-S^b$, so it can be positive, zero, or negative. For a pure state, the coherent information coincides with the entanglement entropy. Intuitively, the coherent information measures the quantum correlation contained in a bipartite state.

The coherent information for a channel $N_a^b$ with target system $b$ is defined as
\begin{align}
I^b(N_a^b):=\sup_{\rho} I^b(N_a^b \rho^{aa'}),
\end{align}
where the supremum is over input states $\rho^{aa'}$ with arbitrary auxiliary system $a'$.\footnote{A more commonly seen but equivalent \cite{jia2017quantum} definition is to optimize only over pure states.} By the data-processing inequality for coherent information \cite{schumacher1996quantum, wilde2017quantum}, local operations $M_b^{b'}$ on $b$ do not increase the coherent information of the final state. Therefore one can equivalently define the channel coherent information as
\begin{align}\label{eq:cic}
I^{b}_{LO}(N_a^b):=\sup_{\rho,M} I^{b'}(M_b^{b'}N_a^b\rho^{aa'}).
\end{align}
The subscript $LO$ stands for optimization over local operations represented by $\rho$ and $M$ at the input and output ends of the channel respectively. This definition suggests the operational interpretation of $I^b(N_a^b)$ as a measure of the maximal state coherent information that can be established through the channel when the input and output parties are free to choose $\rho$ and $M$.

This interpretation motivates us to define \textbf{process coherent information} as follows.\footnote{Incidentally, other entropic measures such as mutual information can be generalized to apply to processes analogously. For example, analogous to (\ref{eq:defci}) process mutual information can be defined as $I_R^{x:y}(W^{xy}):=\sup_{O_R} I^{x:y}(O_R(W^{xy}))$, where $I^{x:y}(\rho^{xy}):=S^x+S^y-S^{xy}$.} The coherent information of the process $W$ with target systems $x$ and supplemental resource $R$ is defined as
\begin{align}\label{eq:defci}
I_R^x(W):=\sup_{O_R} I^x(O_R(W)).
\end{align}
$R$ labels the supplemental operations $O_R$ allowed for optimization (e.g. local operations and classical communications). $O_R$ maps the original process $W$ to a new process $O_R(W)$ whose systems include the target systems $x$ as a subset. Through the Choi isomorphism $O_R(W)$ is represented as a positive semi-definite operator such that its coherent information can be calculated using (\ref{eq:cis}). Often we are interested in taking some parties $X\subset \{A,B,\cdots\}$ as the target and in this case $x$ is all the systems that belong to the parties of $X$. In this case we can write $I_R^X(W)$ for $I_R^x(W)$.

As an example of process coherent information, the channel coherent information (\ref{eq:cic}) is a special case when $W=N$, $X=B$, and $R=LO$.

\subsection{Entanglement measures}
In this section we consider process entanglement measures built out of process coherent information. Consider the quantity
\begin{align}
I^x_{LOCC}(W)=\sup_{O_{LOCC}}I^x(O_{LOCC}(W)),
\end{align}
where the optimization is over operations $O_{LOCC}$ allowed in some LOCC setting. Under this LOCC setting, $I^x_{LOCC}(W)$ is monotonic by construction, and hence is a process entanglement measure.

In the asymptotic setting, it is useful to consider the regularized coherent information
\begin{align}
\mathfrak{I}_{R}^x(W):=\lim_{k\rightarrow \infty}\frac{1}{k} I^x_R(W^{\otimes k}).   
\end{align}
In particular, $\mathfrak{I}_{LOCC}^x(W)$ obeys the monotonicity axiom, and is hence is also a process entanglement measure. We note that not all processes $W$ have $W^{\otimes k}$ as valid processes, and a necessary and sufficient condition for $W$ to have valid self-tensor product is given in \cite{jia2017process}. We call those processes that allow self-tensor products ``self-productible processes''. The regularized coherent information can be defined for them without further specifications. Another way to deal with the subtlety of tensor products is to restrict the allowed local operations. With suitable restrictions, self-tensor products can be defined for all processes. A particular physically motivated construction is the sequential setting where the processes appear in a definite causal order (the parties within an individual process do not need to) \cite{jiageaneralizing}. The regularized coherent information can be defined for all processes in such settings.

\subsection{Process entanglement generation capacity}
The following theorem is a direct generalization of the result of \cite{horodecki2000unified} to processes. It says that the process entanglement measures $\mathfrak{I}^B_{LO\rightarrow}(W^{AB})$ and $\mathfrak{I}^B_{LO\leftrightarrow}(W^{AB})$ have the operational meaning of classical-communication-assisted entanglement generation capacities. The theorem yields quantum communication and entanglement distillation capacities, respectively, for channels and states as special cases.

\begin{theorem}\label{th:pegc}
The classical-communication-assisted entanglement generation capacity for a self-productible two-party process $W^{AB}$ is given by $\mathfrak{I}^B_{R}(W^{AB})$, where $R\in \{LO\rightarrow,LO\leftrightarrow\}$ is the allowed assistance.
\end{theorem}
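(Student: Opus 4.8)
The plan is to reduce the claim to the known channel-capacity result of \cite{horodecki2000unified}, which identifies the regularized coherent information of a channel (optimized over $LO$ with the appropriate direction of classical assistance) with its classical-communication-assisted quantum communication capacity, and the analogous statement for the entanglement distillation capacity of a state. Since Theorem \ref{th:pegc} asserts an equality, I would establish the two inequalities $\mathfrak{I}^B_R(W^{AB}) \le Q_R(W^{AB})$ (achievability) and $Q_R(W^{AB}) \le \mathfrak{I}^B_R(W^{AB})$ (converse) separately, where $Q_R$ denotes the entanglement generation capacity under assistance $R$. The central object is the definition $\mathfrak{I}^B_R(W) = \lim_{k\to\infty} \tfrac{1}{k}\sup_{O_R} I^B(O_R(W^{\otimes k}))$, so self-productibility of $W^{AB}$ is used at the outset to guarantee that $W^{\otimes k}$ is a valid process for every $k$, hence that the regularization is well defined.

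For achievability, I would first show that for any fixed supplemental $LO$ operation $O_R$ one can use the induced bipartite Choi state $O_R(W^{\otimes k})$ as a resource and invoke the one-shot/standard coding results underlying \cite{horodecki2000unified}: a block code built from $O_R(W^{\otimes k})$ distills approximately $I^B(O_R(W^{\otimes k}))$ ebits per invocation, with error vanishing as the block length grows, under assistance $R$. Taking the supremum over $O_R$ and then the limit $k\to\infty$ yields $\mathfrak{I}^B_R(W^{AB})$ as an achievable rate. The process framework enters here only through the observation that $O_R(W^{\otimes k})$ is represented (Sec.~\ref{sec:pem}) by a positive-semidefinite trace-one operator, so its coherent information $I^B$ is computed exactly as for an ordinary bipartite state via \eqref{eq:cis}.

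For the converse, I would combine the monotonicity axiom with the data-processing inequality for coherent information \cite{schumacher1996quantum, wilde2017quantum}. Any protocol that generates entanglement at rate $r$ using $W^{AB}$ under assistance $R$ is itself an $LOCC$ (more precisely $R$-restricted $LO$ plus classical communication) operation acting on $W^{\otimes k}$; its output is a process close to the target maximally entangled state, whose coherent information is nearly $\log$ of its dimension, i.e. $\approx kr$. Since $I^B_R$ is monotone under the allowed operations by construction and data processing controls the approximation error, one obtains $kr \lesssim \sup_{O_R} I^B(O_R(W^{\otimes k}))$; dividing by $k$ and passing to the limit gives $r \le \mathfrak{I}^B_R(W^{AB})$. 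The channel and state cases follow as the specializations noted in the text: taking $W=N$, $X=B$ recovers the quantum communication capacity, and reducing $W$ to a state $\phi$ (with $a_2,b_2$ trivial, as in Fig.~\ref{fig:pslocc}) recovers the entanglement distillation capacity.

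The step I expect to be the main obstacle is the converse bound in the genuinely process-theoretic regime, where $A$ and $B$ may stand in an indefinite causal relation. For ordinary states and channels the converse rests on a clean causal ordering of the coding rounds, but for a general two-party process one must verify that the reduction to a bipartite coherent-information bound survives when the two parties are not causally ordered. The discussion in Sec.~\ref{sec:ls} is the key enabler here: it establishes that an $LOCC$ operation never generates indefinite causal order for any pair of parties other than the single pair $A_1,B_1$ inherited from $W$, so the fine-grained ancillary parties introduced by the protocol are all causally ordered and the standard data-processing machinery applies to them; the only residual indefiniteness is confined to the resource $W$ itself, which is exactly what the coherent information $I^B_R$ is designed to quantify. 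Making this localization of indefiniteness rigorous — and checking that the approximation errors in the converse do not interact badly with the regularization limit — is where the real work lies.
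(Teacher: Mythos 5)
Your proposal follows essentially the same route as the paper: achievability by converting $O_R(W^{\otimes n})$ into a bipartite state and distilling at the hashing rate $D_R(\rho)\ge I^B(\rho)$, and a converse obtained by observing that any rate-$r$ protocol is itself one of the operations in the supremum defining $\mathfrak{I}^B_R$, with output close to $\rho_+^{\otimes nr}$ and hence coherent information close to $nr$. The only substitution needed is that the approximation error in the converse is controlled by the continuity of coherent information (the AFW inequality), not by the data-processing inequality.
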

\begin{proof}[Sketch of proof]
We first show achievability. The protocol is to use some LOCC operation to turn $W^{\otimes n}$ into a shared state, and then perform entanglement distillation. The optimal rate of entanglement generation $C_{R}(W)$ under $R$ using $W$ satisfies
\begin{align}
C_{R}(W^{\otimes n})\ge & D_{R}(O_{R}(W^{\otimes n}))
\\
\ge & I^B(O_{R}(W^{\otimes n})).
\end{align}
$O_R$ is any LOCC operation allowed by $R$, and $D_R(\rho)$ is the distillable entanglement of $\rho$ under $R$. 

One first applies some local operations $O_R$ on $W$. $O_{R}(W^{\otimes n})$ may not be a state, but feeding in a maximally entangled state to each open input system turns it into a state $\rho$ represented by the same operator. Without loss of generality we assume this is carried out. Then one can perform entanglement distillation on the state $O_{R}(W^{\otimes n})$ to obtain $D_{R}(O_{R}(W^{\otimes n}))$ entanglement, which justifies the first line. The second line follows from the Hashing inequality proved in \cite{devetak2005distillation}, which says that $D_{LO\rightarrow}(\rho)\ge I^B(\rho)$, and implies that $D_{LO\leftrightarrow}(\rho)\ge I^B(\rho)$. Optimization over $O_R$, dividing both sides by $n$ and taking the $n\rightarrow \infty$ limit, one gets that the rate $\mathfrak{I}^B_{R}(W^{AB})$ is achievable. 

We now move to the converse part of the proof. It holds that
\begin{align}
C:=&C_{R}(W^{\otimes n})=I^B(\rho_+^{\otimes C})
\\ \le& I^B(E_{R}(W^{\otimes n}))+\epsilon_n,
\end{align}
where $E_R$ is an optimal operation for entanglement generation and $\epsilon_n\rightarrow 0$ in the large $n$ limit. The first line holds because each singlet $\rho_+$ has coherent information $1$. The second line holds because the optimal operation generates a state $E_{R}(W^{\otimes n})$ that is close to $C$ singlet and because coherent information is continuous (AFW inequality \cite{alicki2004continuity, winter2016tight, wilde2017quantum}). Dividing both sides by $n$ and taking the $n\rightarrow \infty$ limit, one gets that $\mathfrak{I}^B_{R}(W^{AB})$ upper-bounds the capacity. 
\end{proof}
The theorem is adaptable to general processes under suitable restrictions on local operations. For example, the theorem can be adapted to hold for the entanglement generation task in the sequential setting mentioned above below the definition of regularized coherent information \cite{jiageneralizing}.

The entanglement generation capacity for processes is a process entanglement measure by definition. It can be viewed as a generalization of the distillable entanglement for states. The entanglement generation capacity for processes reduce to the quantum communication capacity for channels \cite{barnum2000quantum}. From this perspective, quantum channel capacities $Q_{R}$ are in fact entanglement measures. Specifically, we have $Q_{LO}=\mathfrak{I}^B_{LO}, Q_{LO\rightarrow}=\mathfrak{I}^B_{LO\rightarrow}, Q_{LO\leftarrow}=\mathfrak{I}^A_{LO\leftarrow}$ and $Q_{LO\leftrightarrow}=\mathfrak{I}^B_{LO\leftrightarrow}=\mathfrak{I}^A_{LO\leftrightarrow}$ \cite{horodecki2000unified}. Forward classical communication does not increase the quantum channel capacity \cite{barnum2000quantum}, so $Q_{LO}$ and $Q_{LO\rightarrow}$ are entanglement measures under the $LO\rightarrow$ setting. $Q_{LO\leftarrow}$ and $Q_{LO\leftrightarrow}$ are entanglement measures under the $LO\leftarrow$ and $LO\leftrightarrow$ settings, respectively.

\section{Entanglement for channels and networks}\label{sec:ecn}
As we saw, the generalization of entanglement to processes opens up the possibilities to study entanglement of channels and networks as special cases of processes. Viewed from a spacetime perspective we can now generalize consideration of entanglement from bifurcation of spacelike regions to bifurcation of spacetime regions. In this section we illustrate these possibilities through toy model examples. For concreteness we focus on coherent-information-based entanglement measures.

\subsection{Channel}
Consider a quantum channel with two input systems $a, b$ and two output systems $c, d$. Fig. \ref{fig:ste1} illustrates different ways to bifurcate the channel --- let one party $A$ have access to the shaded systems and another party $B$ have access to the rest. The goal is to calculate coherent-information-based entanglement measures.

Unitary channels are of particular interest among all channels. %As we will show, one can express the entanglement in terms of the sizes of the systems. 
The Choi state of a unitary is a maximally entangled state so its von Neumann entropy vanishes. Consequently,
\begin{align}
I^A(N_{ab}^{cd})=S^{A}-S^{abcd}=S^A=S^{B}=S^{B}-S^{abcd}=I^B(N_{ab}^{cd}).
\end{align}
Here $S^x$ denotes the von Neumann entropy of the (reduced) Choi operator of the channel $N$. The equalities hold because for a pure state $\rho^{AB}$, $S^A=S^B$. This says that it does not make a difference which party we set as the target --- without loss of generality let it be $A$. Moreover, partially tracing out one system of a maximally entangled state yields the maximally mixed state, so
\begin{align}
S^{ab}=S^{cd}=\log \abs{ab}=\log \abs{cd},
\end{align}
where $\abs{x}:=\dim \mathcal{H}^{x}$ ($N$ is unitary so $\abs{ab}=\abs{cd}$), and $S^x=\log \abs{x}$ for $x=a,b,c,d$.

\begin{figure}
    \centering
\includegraphics[width=.6\textwidth]{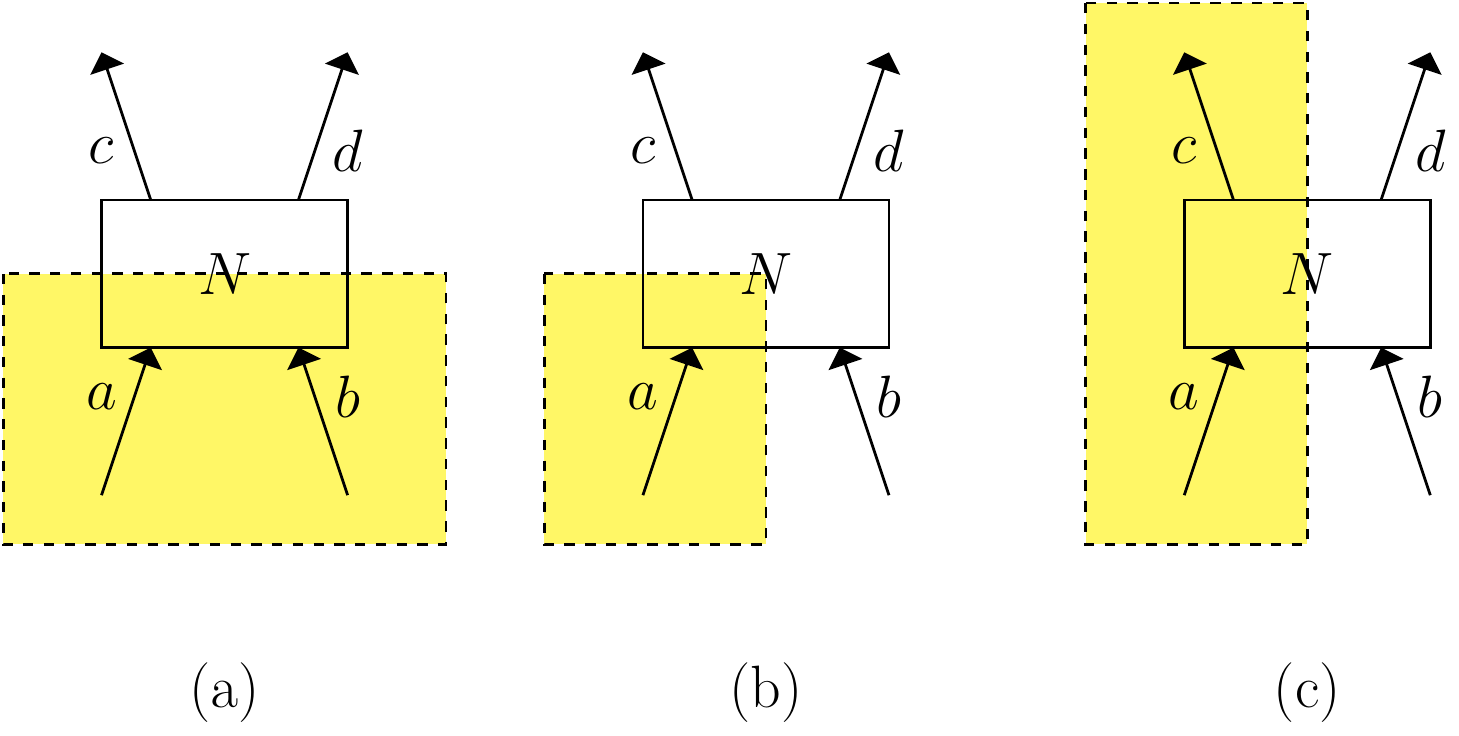}    \caption{Entanglements of a channel}
    \label{fig:ste1}
\end{figure}

Taking these into account, we have for cases (a) and (b) in Fig. \ref{fig:ste1}, respectively, 
\begin{align}
I^A(N_{ab}^{cd})=&S^{ab}-S^{abcd}=S^{ab}=\log\abs{ab},
\\
I^A(N_{ab}^{cd})=&S^{a}-S^{abcd}=S^{a}=\log\abs{a}.
\end{align}
Case (c) is a bit more complicated. We have
\begin{align}
I^A(N_{ab}^{cd})=&S^{ac}-S^{abcd}=S^{ac}.
\end{align}
This value is undetermined. For instance, if $N_{ab}^{cd}=H_{a}^{c}\otimes G_{b}^{d}$ factors into two unitaries $H$ and $G$, then $S^{ac}=0$. If instead $N_{ab}^{cd}=H_{a}^{d}\otimes G_{b}^{c}$ factors into two unitaries $H$ and $G$ in another way, then $S^{ac}=S(\omega^{ac})=\log\abs{ac}$. Nevertheless, if $N_{ab}^{cd}$ is random and the bifurcation results in two systems of equal size, i.e., $\abs{ac}=\abs{bd}$, then one can show that \cite{hosur2016chaos}
\begin{align}
I^A(N_{ab}^{cd})> \log \abs{ac}-1.%-\log(1-2^{-\abs{a}-\abs{b}-1})
\end{align}
Under these assumptions, in all three cases of Fig. \ref{fig:ste1} the entanglement equals or approximately equals the number of qubits in the target system. This implies that optimizing over local operation and classical communication, or over regularization, can only have zero or negligible increase for the coherent information, so we conclude that the entanglement is maximal or almost maximal for these cases of ``timelike'', ``spacetime'' or ``spacelike'' bifurcations of the unitary channels.

\subsection{Network}
\begin{figure}
    \centering
\includegraphics[width=.66\textwidth]{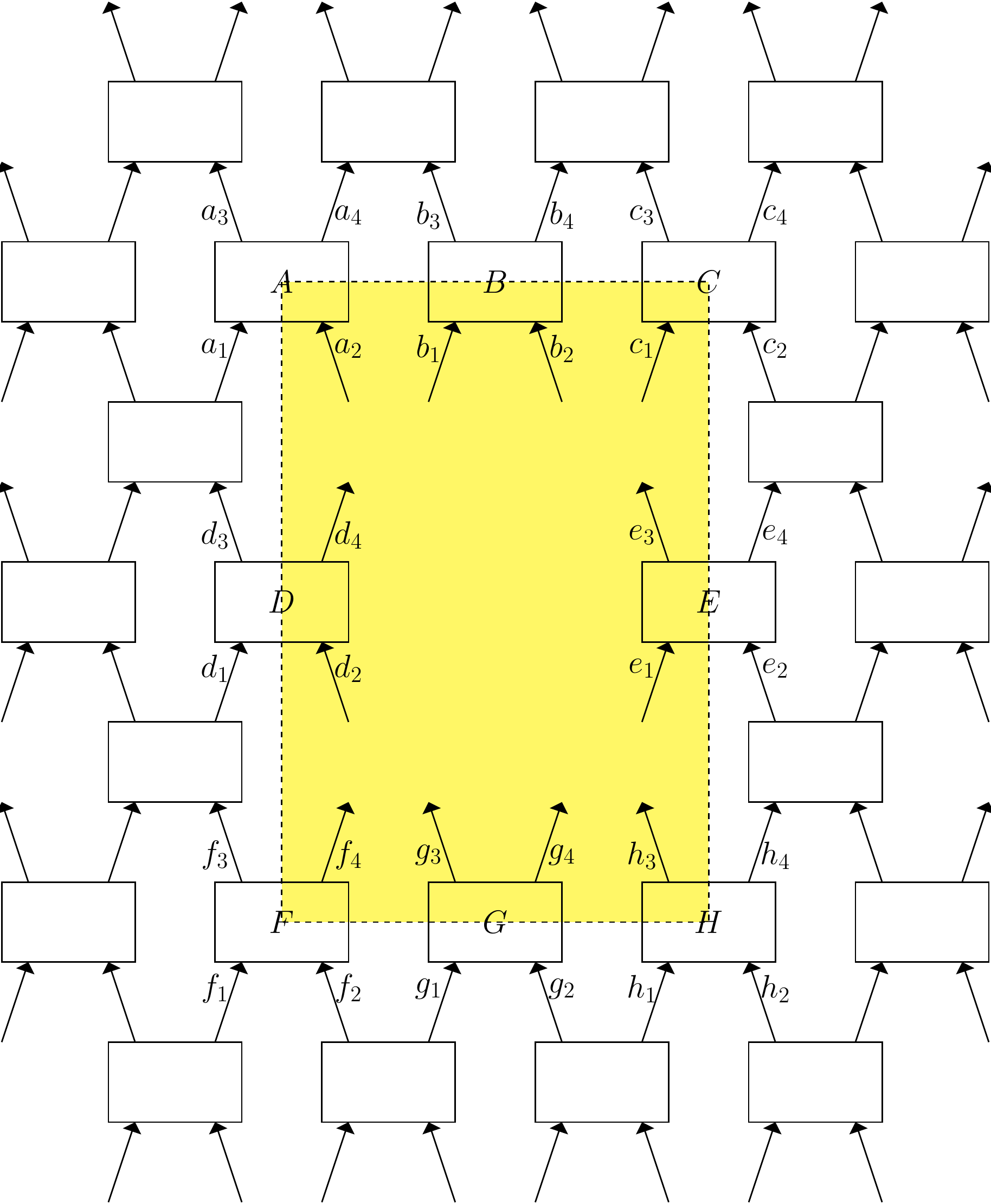}    \caption{Bipartition of a quantum network.}
    \label{fig:ste2}
\end{figure}

To see how entanglement can be considered for quantum networks and bipartition of spacetime regions, consider the example in Fig. \ref{fig:ste2}. This quantum network is composed of many unitary channels drawn as boxes, and can be viewed as a toy model whose continuum limit is a quantum field theory \cite{oreshkov2016operational, hardy2016operational}. It is assumed that the network is foliated according a global time such that each horizontal set of channels act at the same time. We break the network into two parts of the shaded and unshaded regions and ask for the coherent-information-based entanglement measures with the shaded region as the target.

Since every box is unitary, the whole network is represented by a pure Choi operator. To calculate the coherent information of the bipartition, we only need to trace out all the open subsystems of the unshaded region and calculate the entropy of the reduced network. Due to the global time foliation, information in the output subsystems $b_3b_4$ of $B$ must flow to the global future in the end and be traced out. Therefore the reduced network on subsystem $b_1b_2$ is represented by the maximally mixed state $\omega^{b_1b_2}$, like in case (a) of the last section. Likewise on $g_3g_4$ we have $\omega^{g_3g_4}$. Similarly, information in $a_3a_4$ of $A$ must flow to the global future and be traced out to yield a maximally mixed state, which implies that the reduced operator on $a_2$ is the maximally mixed state, like in case (b) of the last section. The same conclusion holds for $c_1, f_4$, and $h_3$.

Information in $d_3$ must also flow to the global future as there is no way it gets to the shaded part. Similarly, information in $d_1$ must originate in the global past. Hence like in case (c) of the last section, if $D$ is a random unitary and $\abs{d_2d_4}=\abs{d_1d_3}$, the reduced operator on $d_2d_4$ is approximately maximally mixed. The same conclusion holds for $e_1e_3$ under the same assumptions.

Altogether, if $D$ and $E$ are random, the coherent information with the shaded region as target is approximately maximal. In the limit of large system size the difference from maximal coherent information is negligible (optimization over LOCC or regularization becomes redundant), and we conclude that in this case the entanglement measured by coherent-information based measures $I^B_{LOCC}$ or $\mathfrak{I}^B_{LOCC}$ is maximal.

%\section{LO paradigm}\label{sec:lp}
%The fact that classical communications can be viewed as part of the processes suggests a new way to look at the LOCC paradigm. Instead of taking a resource (such as a state) as the starting point and adjoining classical communications to it afterwards, one can take the resource and the classical communications together as the given resource. Then the LOCC paradigm upgrades into the LO paradigm, where only local operations are free. %(Fig. \ref{fig:pet}). In this LO paradigm, whether some classical communication is allowed is determined by the given resource itself. An entanglement measure is required to obey the monotonicity under local operations axiom. This new LO paradigm is just an equivalent formulation of the LOCC paradigm, but it allows one to maintain that all communication among the local parties are mediated by the process, and may offer a convenient perspective in deriving certain results.

\section{Outlooks}\label{sec:out}
In this paper, we showed how the notions of entanglement and entanglement measures can be generalized to apply to all states, channels, networks, and processes. One of the main motivations for the generalization is to enable the study of entanglement in the presence of indefinite causal structure in quantum gravity. It had been suggested that with the generalized entanglement, indefinite causal structure may regularize divergences in quantum field theory, elucidate black hole information processing problems, and explain ``dark energy'' \cite{jia2017quantum}. These suggestions are only preliminary and need further development, and a lot of work needs to be done to check the implication of the generalized entanglement on fundamental problems in quantum gravity. %In terms of pure quantum communication theory, Theorem \ref{th:pegc} shows that results about state entanglement can be extended to channels, and even general networks and processes. 

Apart from applications to particular questions, there are several directions to develop the theory of generalized entanglement itself. When can one turn a state entanglement measure into a process measure? Can one generalize multipartite entanglement along similar lines of reasoning? Can one allow even more general local operations such as processes with indefinite causal structure? Can one generalize the study to general probabilistic theories and time-symmetric theories?

The particular entanglement measures we studied in some detail --s coherent-information-based-measures, contain an optimization over LOCC. Such an optimization is in general hard to compute exactly. However, it is possible to find efficiently computable bounds for the quantities. As mentioned, in a special case where the process is a state the measure reduces to the distillable entanglement of the state. No general method to compute this quantity is known, but there are computable bounds such as logarithmic negativity \cite{vidal2002computable, plenio2005logarithmic} and the semidefinite programming bounds in \cite{rains2001semidefinite} and \cite{wang2016improved}. Because distillable entanglement is a special case of coherent-information-based entanglement measures and also of process entanglement generation capacity, there are potential ways to generalize these bounds to apply to processes.

Finally, the upgrade of the LOCC paradigm to account for causal structure in a fine-grained way suggests that the paradigm itself may be shifted. The fact that classical communications can be viewed as part of the process suggests a different way to look at the LOCC paradigm. Instead of taking the initial resource (such as a state) as the starting point and adjoining classical communications to it afterwards, one can take the initial resource and the classical communications together as the given resource. Then the LOCC paradigm may be upgraded into an ``LO paradigm'', where only local operations are free. In this LO paradigm, whether some classical communication is allowed is determined by the given resource itself. An entanglement measure is required to obey the monotonicity under local operations axiom. This new LO paradigm is just an equivalent formulation of the LOCC paradigm, but it allows one to maintain that all communications among the local parties are mediated by the given resource itself. It is yet unclear what advantage (or disadvantage) this new perspective offers in practical applications, but at least for intuition gaining it accounts for causal structure of local parties in a nontrivial and more natural way.

%\begin{enumerate}
%\item Is there an analog of Lo-Popescu?
%\item Relation to Leggett-Garg inequality
%\item Conjecture: pre-CC is equivalent to pre-existing classical correlation.
%\end{enumerate}

\section*{Acknowledgements}
The author thanks Lucien Hardy and Achim Kempf for fruitful guidance and generous support as supervisors, as well as Jason Pye, Robert Oeckl and Tian Zhang for helpful discussions. The author also thanks the organizers Natacha Altamirano, Fil Simovic, and Alexander Smith, the host Robert Mann, and all the participants of the Spacetime and Information Workshop at Manitoulin Island for offering a chance to improve the work. A special thanks goes to Fabio Costa for many valuable discussions, particularly about the form of the allowed local operations.

\bibliographystyle{apsrev}
\bibliography{bib.bib}

\end{document}